\def\BibTeX{{\rm B\kern-.05em{\sc i\kern-.025em b}\kern-.08emT\kern-.1667em\lower.7ex\hbox{E}\kern-.125emX}}
\pgfplotsset{width=10cm,compat=1.9}
\newcommand{\removelatexerror}{\let\@latex@error\@gobble}
\g@addto@macro{\@algocf@init}{\SetKwInOut{Parameter}{Parameters}} 
\pgfplotsset{compat=newest}
\newtheorem{theorem}{Theorem}
\newtheorem{lemma}[theorem]{Lemma}
\newtheorem{definition}{Definition}
\begin{document}
\interfootnotelinepenalty=10000

	\author{
		Mohammad M. Jalalzai,
		Chen Feng
	}

\sloppy

\title{
Optimal Latency for Partial Synchronous BFT
\thanks{

}

}

\title{\Large \bf 
Fast B4B: Fast BFT for Blockchains ("patent pending")
}

\author{
{\rm Mohammad M. Jalalzai}\\
The University of British Columbia
\and
{\rm Chen Feng}\\
The University of British Columbia

\and 
{\rm Victoria Lemieux}\\
The University of British Columbia
}

\maketitle
\begin{abstract}
    Low latency is one of the desired properties for partially synchronous Byzantine consensus protocols. Previous protocols have achieved consensus with just two communication steps either by reducing the upper bound on the number of faults the protocol can tolerate ($f = \frac{n+1}{5}$) or use of trusted hardware like Trusted Execution Environment or TEEs. In this paper, we propose a protocol called VBFT, in which the protocol achieves consensus in just two communication steps. VBFT can tolerate maximum number of faults a partial BFT consensus can tolerate ($f = \frac{n-1}{3}$).  Furthermore, VBFT does not require the use of any trusted hardware. The trade-off for this achievement is that at most $f$ nodes  nodes may revert their blocks for small number times. We show that this reversion of a block will not compromise the safety of the protocol at all, yet it may incur a  small amount of additional latency during view change. 
   
\end{abstract}
\maketitle
\pagestyle{plain}

\section{Introduction}\label{Introduction}

Consensus has been a long-studied problem in distributed systems \cite{Consensus-First-Paper,lamport1998thePaxos}. Consensus protocols may either tolerate crash faults \cite{lamport1998thePaxos, lamport2005generalizedPaxos, EPaxos, EpaxosRevisted} or Byzantine faults (arbitrary faults) \cite{Castro:1999:PBF:296806.296824,SBFT,Jalal-Window,Proteus1,HotStuff,ByzantineTolerantGeneralizedPaxos,Aardvark,Prime, avarikioti2021fnfbft,Zeno}. There has been proposals on fast crash failure tolerant protocols that can achieve consensus in just one round trip (RTT) or two communication steps \cite{Atlas,EPaxos,lamport2005generalizedPaxos, EpaxosRevisted} during fast path when certain conditions meet (e.g., no dependency among transactions). 

On the other hand, Byzantine faults are arbitrary faults that include not only crash failure but also more severe faults including software bugs, malicious attacks, and collusion among malicious processors (nodes), etc. Byzantine fault-tolerant protocols achieve fault tolerance by replicating a state, which is often referred to as State Machine Replication (SMR) in the literature. BFT-based protocols have been used in intrusion-tolerant services \cite{Fault-Tolerance-System, Upright-Services, BFT-Critical-Infrastructure,BFT-Scada} including databases \cite{BFT-DataBases}. Recently, BFT-based consensus protocols have been actively used in blockchain technology \cite{Monoxide,SBFT,Jalal-Window,HotStuff,jalalzai2021fasthotstuff,Proteus1,Algorand,Casper}. In all of these use cases, the underlying BFT protocol must remain efficient. More specifically, the protocol response time should be fast. This helps the client requests to get executed and responses returned to the client faster, thereby greatly improving the client experience.

In BFT-based consensus protocols, one of the most important aspects that affect the protocol latency is the number of communication steps. This is even more important when the protocol operates in a WAN (Wide-Area-Network) environment, as the latency for each communication step might be several hundred times higher than the LAN (Local-Area-Network) environment.

Generally, BFT-based consensus protocols in the partially synchronous mode\footnote{There is an unknown maximum bound on the message delay} either operate in three communication steps \cite{Castro:1999:PBF:296806.296824,BFT-SMART,Aardvark} or more \cite{HotStuff,jalalzai2020hermes,jalalzai2021fasthotstuff,SBFT} during normal execution or in the absence of failure. Here,  the normal execution latency is an important metric because
the worst-case latency of partially synchronous BFT consensus protocols can be unbounded. Moreover, failure
(which leads to view changes) is often a rare event in practice. Hence, 
 BFT-based consensus protocols that operate with just two communication steps are regarded as fast BFT consensus protocols \cite{OptimisticBFT3f, FastBFT,Revisiting-Optimal-Resilience-of-Fast-Byzantine-Consensus}.

The proposals in \cite{OptimisticBFT3f} can reach consensus with two communication rounds if all the nodes $3f+1$ ($f$ is the upper bound of the number of Byzantine nodes) in the network are honest. Hence, these protocols lack resilience against Byzantine faults.
Fast-BFT \cite{FastBFT} is another protocol that can achieve consensus in two communication steps, but the maximum number of Byzantine faults it can tolerate during worst case is small $f = \frac{n-1}{5}$. Another recent proposal \cite{GoodCase-Latency-PODC} has further improved the results of \cite{FastBFT} and shown that Byzantine consensus can be achieved in two communication steps where maximum faults tolerated is $f = \frac{n+1}{5}$. 
A concurrent but similar work to \cite{GoodCase-Latency-PODC} is presented in  \cite{Revisiting-Optimal-Resilience-of-Fast-Byzantine-Consensus}. The authors of this work revisit optimal resilience for fast Byzantine consensus and present a tight lower bound for the resiliency $n = 5f-1$ (where $f = \frac{n+1}{5}$)  for fast Byzantine protocols. 
MinBFT \cite{MinBFT} uses trusted hardware TEEs (Trusted Execution Environment) to prevent equivocation attack \footnote{An attack during which a malicious primary/leader proposes multiple blocks for the same height.}. 
Although TEEs are considered to be secure, some recent work (e.g., \cite{TEE-Vulnerability}) raises legitimate concerns about the security of TEEs by identifying several vulnerabilities.

In this paper, we propose a fast BFT-based consensus protocol called VBFT that can achieve consensus during normal execution  with $n=3f+1$ nodes. Hence, it tolerates the \emph{maximum} number of Byzantine faults ($f = \frac{n-1}{3}$), thereby providing the best possible resilience achieved by a partially synchronous  Byzantine protocol \cite{Fischer:1985:IDC:3149.214121}. Therefore, the  resiliency  achieved in this protocol is optimal.
Moreover, it does not require the use of any trusted hardware e.g., TEEs. Like any optimization, there is a trade-off for this improvement. In short, the trade-off is that 
there may be a small latency incurring during the view change process (two communication step)\footnote{ Though this trade-off can be addressed by piggybacking a constant size message during view change.}.

Moreover, there is a possibility that at most $f$ number of honest nodes may revoke their committed block for limited number of times at worst case (during the protocol lifetime) until at most $f$ Byzantine nodes are blacklisted. It should be noted that these revoked transactions will not be considered committed by the clients. Hence, there won't be any double-spending \footnote{A request is considered committed by a client, which is then later revoked.}.
In other words, revocation can only delay the commitment of the transaction for some time. The revocation is only possible if the primary node performs equivocation (propose multiple requests for the same sequence). Since equivocation attacks can easily be detected, hence, the culprit primary can be blacklisted. Therefore, after blacklisting $f$ Byzantine primaries, there will be no Byzantine node left. Hence, in the future, no equivocation will take place. 

We present the VBFT protocol in this paper in the context of blockchains. But  VBFT consensus protocol can simply be adapted for other use cases too.
 VBFT protocol exhibits the following properties:

\begin{itemize}
    \item Optimal latency during normal protocol execution.
    \item Practical throughput, comparable to the state-of-the-art consensus protocol.
    \item Simple by design.
\end{itemize}


\section {Background}
 In PBFT \cite{Castro:1999:PBF:296806.296824} consensus protocol, initially, the primary node broadcasts the proposal. Then there are additional two rounds of the broadcast before an agreement is reached (overall three communication steps). 
  The first round of broadcast is to agree on a request order. Once nodes agree on the sequence for the request, then the request will be assigned the same sequence even if the primary is replaced (view change). Whereas the second round uses broadcast
to agree on committing or executing a request. The two-round broadcast is used to provide a guarantee that if at least a single node commits a request then after recovering from a view change all other nodes will commit the same request. This is due to the reason that if a single node completes the second round of broadcast (hence commits the request), then at least the other two-third of nodes have completed the first round of broadcast. Hence after the view change, these two third nodes make sure the request that has been committed by at least by a single node will be proposed again in the same sequence by the next primary.
This means the first round of broadcast is only necessary for the presence of a failure that will result in the view change. Therefore, this additional round during normal protocol operation can be removed. But if the first round is removed, this means agreement on the requested order and commit occurs in a single round. This is fine during normal protocol execution. But during view change, if at most $f$ honest nodes have committed a request $b$, there is no guarantee that the request will be committed in the same sequence when the new primary is in charge. To avoid this, we introduce a recovery phase at the end of the view change to recover a request $b$ that has been committed by at most $f$ nodes. In this way the new primary will be able to propose the next request, extending $b$. 

But there is also the problem of equivocation by a Byzantine primary.  If the Byzantine primary proposes two equivocating proposals $b$ and $b'$ and one of them is committed by at most $f$ honest nodes, then during the view change, honest nodes cannot agree on the latest committed request to be extended by the new primary. Hence, the protocol will lose liveness \footnote{The protocol will stall indefinitely.}. If the protocol chooses to extend one the of equivocating requests ($b$ or $b'$) then protocol safety cannot be guaranteed. For example, if request $b$ has been committed by at most $f$ nodes, and request $b'$ is chosen (randomly) to be extended, then at most $f$ honest nodes have to revoke request $b$. 

But the good news is that if a request $b$ is committed by at least $f+1$ honest nodes, then it cannot be revoked through equivocation. 
Therefore, though VBFT allows the equivocation and revocation of a request (if the request is committed by at most $f$ nodes) temporarily before all Byzantine nodes get blacklisted, it avoids double-spending. The client only considers a request to be successfully committed by the network if it receives $2f+1$ responses from distinct nodes, verifying that the request has been committed. In this case, out of $2f+1$ nodes (that respond to the client) at least $f+1$ honest nodes have committed the request. We show that a request revocation is not possible once the client considers a request committed ( since at least $f+1$ honest nodes commit a request). Moreover, to avoid equivocation in the future, VBFT simply black lists equivocating primaries. Therefore, after blacklisting at most $f$ Byzantine nodes, no equivocations will take place. Hence, after that, no request will be revoked anymore if it is committed by at least a single node.

The remainder of this paper is organized as follows:
Section \ref{Section: System Model} presents system model, definitions and preliminaries. Section \ref{Section: Protocol} presents detailed VBFT protocol.  Section \ref{Section: Blacklisting} presents how a Byzantine node can be blacklisted. Section \ref{Section:Proof} provides formal proofs of correction for VBFT. 
Section \ref{Section: Related Work}  present related work and Section \ref{Section: Conclusion} concludes the paper.

\section{Definitions and Model}
\label{Section: System Model}
   VBFT operates under the Byzantine fault model. VBFT can tolerate up to $f$ Byzantine nodes where the total number of nodes in the network is $n$ such that $n=3f+1$. The nodes that follow the protocol are referred to as correct/honest nodes. 
 
   Nodes are not able to break encryption, signatures, and collision-resistant hashes. We assume that all messages exchanged among nodes are signed.

To avoid the FLP impossibility result\cite{Fischer:1985:IDC:3149.214121}, VBFT assumes partial synchrony~\cite{Dwork:1988:CPP:42282.42283} model with a fixed but unknown upper bound on message delay.
   The period during which a block is generated is called an epoch. Each node maintains a timer. If an epoch is not completed within a specific period (called timeout period), then the node will timeout and will trigger the view change (changing the primary) process. The node then doubles the timeout value, to give enough chance for the next primary to drive the consensus. 
    As a state machine replication service, VBFT needs to satisfy the following properties.

   \subsection{Definitions}
   \begin{definition}[Relaxed Safety]
A protocol is R-safe against all Byzantine faults if the following statement holds: in the presence of $f$ Byzantine nodes, if \textcolor{black}{$2f+1$} nodes or $f+1$ correct (honest) nodes commit a block at the sequence (blockchain height) $s$, then no other block will ever be committed at the sequence $s$.
\end{definition}

\begin{definition}[Strong Safety]
A protocol is considered to be S-safe in the presence of \textcolor{black}{$f$} Byzantine nodes, if a single correct node commits a block at the sequence (blockchain height) $s$, then no other block will ever be committed at the sequence $s$.
\end{definition}

  \begin{definition}[Liveness]
A protocol is  alive if it guarantees progress 
in the presence of 
at most  \textcolor{black}{$f$} Byzantine nodes.
\end{definition}
   
   \subsection{Preliminaries}
\noindent \textbf{View and View Number.}
A view determines which node is in charge (primary). A view is associated with a monotonically increasing number called view number. A node uses a deterministic function to map the view number into a node $ID$. In other words, the view number determines the primary for a specific view.
 
 \noindent \textbf{Signature Aggregation.} VBFT uses signature aggregation \cite{Short-Signatures-from-the-Weil-Pairing,Boneh:2003,BDNSignatureScheme} in which signatures from nodes can be aggregated into a 
 a single collective signature of constant size. 
 Upon receipt of the messages ($M_1,M_2,\ldots, M_y$ where $ 2f+1 \leq y \leq n$)   with their respective signatures ($\sigma_1,\sigma_2,\ldots, \sigma_y$)
 the primary then generates an aggregated signature $\sigma \gets AggSign(\{M_i, \sigma_i\}_{i \in N})$. The aggregated signature can be verified by replicas given the messages $M_1,M_2,\ldots, M_y$ where $ 2f+1 \leq y \leq n$, the aggregated signature $\sigma$, and public keys $PK_1,PK_2,\ldots,PK_y$. Signature aggregation has previously been used in BFT-based protocols \cite{Castro:1999:PBF:296806.296824,Lamport:1982:BGP:357172.357176, Jalal-Window}. Any other signature scheme may also be used with VBFT as long as the identity of the message sender is known.

 \noindent \textbf{Block and Blockchain}
 Requests are also called a transaction. Transactions are batched into a block data structure. This optimization improves throughput as the network decides on a block instead of a single transaction. Each block keeps the hash of the previous block as a pointer to the previous block. Hence, building a chain of blocks called the blockchain.
 
 \noindent \textbf{Quorum Certificate (QC) and Aggregated QC.}
 A quorum certificate (QC) is the collection of $2f+1$ messages of a specific type from distinct nodes. For example, 
  the collection of at least $2f+1$ votes from distinct nodes for a block is called $QC$ for the block denoted as $QC_b$. A block is certified when enough votes (at least $2f+1$) are received to build its $QC$ or its $QC$ is received. Similarly, collection of $2f+1$ view messages forms $QC$ for the view change or $QC_v$.  
 An aggregated $QC$ or $AggQC$ is collection of $2f+1$ $QC_v$s. If $b$ is the latest committed block or if $QC_b$ is the $QC$ with highest view in the $AggQC$, then the $QC_b$ is called the highQC. 

\SetKwFor{Upon}{upon}{do}{end}
\SetKwFor{Check}{check always for}
{then}{end}


\setlength{\textfloatsep}{1.5pt}

     \SetKwFor{Upon}{upon}{do}{end}
\SetKwFor{Check}{check always for}
{then}{end}
\SetKwFor{Checkuntil}{check always until}
{then}{end}
\SetKwFor{Continuteuntil}{Continue until}
{then}{end}

\begin{algorithm}
\DontPrintSemicolon 
\label{Algorithm: Utilities for nodei}
\label{Algorithm:Utilities}
\SetKwBlock{Begin}{Begin}{}
\SetAlgoLined
\caption{Utilities for node\textit{i}}
\SetAlgoLined

\SetKwFunction{Function}{SendMsgDown}
\SetKwFunction{FuncGetMsg}{MSG}
\SetKwFunction{VoteForMsg}{VoteMSG}
\SetKwFunction{CreatePrepareMsg}{CreatePrepareMsg}
\SetKwFunction{BQC}{GenerateQC}
\SetKwFunction{PQC}{ThresholdGenerateQC}
\SetKwFunction{Matchingmsgs}{matchingMsg}
\SetKwFunction{CreateAggQC}{CreateAggQC}
\SetKwFunction{CreateNVMsg}{CreateNVMsg}
\SetKwFunction{MatchingQCs}{matchingQC}
\SetKwFunction{safeblock}{BasicSafeProposal}
\SetKwFunction{PipelinedSafeBlock}{SafetyCheck}
\SetKwFunction{aggregateMsg}{AggregateMsg}
\SetKwProg{Fn}{Func}{:}{}

\Fn{\CreatePrepareMsg{type, v,s,h,d,$qc_{nr}$, cmd}}{
b.type $\gets$ type\\
b.v $\gets$ v\\
b.s $\gets$ s\\
b.h $\gets$ h\\
b.parent $\gets$ d\\
$b.QC_{nr} \gets qc_{nr}$\\
b.cmd $\gets$ cmd\\
return \textcolor{blue} {{b}}
}
\textbf{End Function}\\

\Fn{\BQC{$V$}}{
qc.type $\gets$ m.type : $m \in V$\\
qc.viewNumber $\gets$ m.viewNumber : $m \in V$\\
qc.block $\gets$ m.block :  $m \in V$\\
qc.sig $\gets$ AggSign( qc.type, qc.viewNumber,\\
qc.s $\gets$ V.s,\\
qc.block,i, \{m.Sig $|  m \in V$\})\\

return \textcolor{blue} {qc}
}
\textbf{End Function}\\


\Fn{\CreateAggQC{$\eta_{Set}$}}{
aggQC.QCset $\gets$ extract QCs from $\eta_{Set}$\\
aggQC.sig $\gets$
AggSign( curView, \{$qc.block |qc.block \in aggQC.QCset $\}, \{$i|i \in N$\}, \{$m.Sig |  m \in \eta_{Set}$\})\\
return \textcolor{blue} {aggQC}
}
\textbf{End Function}\\

\Fn{\CreateNVMsg{$AggQC$,$NextView$}}{
NewViewMsg.type $\gets$ NEW-VIEW\\
NewViewMsg.AggQC $\gets$  $AggQC$\\
NewViewMsg.v $\gets$
$NextView$\\
return \textcolor{blue} {NewViewMsg}
}
\textbf{End Function}\\

\Fn{\PipelinedSafeBlock{b,qc}}{
\If{$b.viewNumber==qc.viewNumber$ $\land b.s=qc.s+1$}{
return \textcolor{blue}{$b.s > curS$}
}
\If{$b.viewNumber>qc.viewNumber$}{
\If{$\beta \land$ B contains $\beta$'s payload }{
return \textcolor{blue}{\textit{b extends from $highQC.block$}}

\If{$\neg \beta$}{
return \textcolor{blue}{\textit{b extends from $highQC.block$}}
}
}

}

}
\textbf{End Function}
\end{algorithm}

\section{VBFT Protocol} 
\label{Section: Protocol}
The VBFT protocol operates in two modes namely normal and the view change mode. The protocol operates in normal mode where blocks are added into the chain until a failure is encountered. To recover from failure, the VBFT switches to the view change mode. Upon successful completion of the view change mode, the VBFT again begins the normal mode.

\subsection{Normal Mode}

 \begin{algorithm*}
\SetKwFunction{Function}{SendMsgDown}
\SetKwFunction{FuncGetMsg}{MSG}
\SetKwFunction{VoteForMsg}{VoteMSG}
\SetKwFunction{CreatePrepareMsg}{CreatePrepareMsg}
\SetKwFunction{QC}{GenerateQC}
\SetKwFunction{Matchingmsgs}{matchingMsg}
\SetKwFunction{MatchingQCs}{matchingQC}
\SetKwFunction{safeblock}{SafeProposal}
\SetKwFunction{aggregateMsg}{AggregateMsg}
\SetKwFunction{PipelinedSafeBlock}{SafetyCheck}
\SetKwProg{Fn}{Func}{:}{}
\SetKwFor{ForEachin}{foreachin}{}{}
\caption{Normal execution for node $i$}
\label{Algorithm: Normal-Mode}
\ForEachin{curView $\gets$ 1,2,3,... }{ 
\If{i is primary}{
\If{If there is $\beta.s == highQC.s+1$}{
    \If{If commands for $\beta$ are present}{
    $B\gets$ CreatePrepareMsg(\textit{Prepare,$v$,$s$,$h$,$parent$,$\perp$,client’s command (from $\beta$)})\\
}
      \Else{
        Broadcast Request for $\beta$ payload 
        
        \Upon{Receipt of $2f+1$ Negative Response \ msgs}{
  
     B$\gets$ CreatePrepareMsg(\textit{Prepare,$v$,$s$,$h$,$parent$,$qc_{nr}$,client’s command})\\
    }
\Upon{Receipt of payload for $\beta$}{
  
     $B\gets$ CreatePrepareMsg(\textit{Prepare,$v$,$s$,$h$,$parent$,$\perp$,client’s command (from $\beta$)})\\
    }     }
     }\Else{
     \Upon{Receipt of $2f+1$ votes}{
      B$\gets$ CreatePrepareMsg(\textit{Prepare,$v$,$s$,$h$,$parent$,$\perp$,client’s command})\\
      }
     }
     broadcast $B$

}
\If{{i is normal replica}}{
\Upon{Receipt of the Request for $\beta$ payload}{
Send payload to the primary or Negative Response if don't have payload
}

 \text{wait for $B$  from primary(curView)}\\

\If{ SafetyCheck($B, highqc$) }  {

Send vote $\langle v,s,h,parent,i \rangle$ for prepare message to primary(curView)

}

 \Upon{Receipt of $2f+1$ votes for block $B$}  {
 execute new commands in $B$\\
 respond to clients
}

}
\Check{For timeout or proof of maliciousness}{
Broadcast $\textsc{view-change}$ message $V$
}
}
\end{algorithm*}

The normal mode is straightforward. The message pattern during normal mode is given in the Figure \ref{fig:Message-pattern}. The primary node receives requests/transactions from clients. A client $c$ sends its signed transaction $\langle REQUEST,o,t,c \rangle$ to the primary. In a transaction the field $o$ is the operation requested by the client, $t$ is the timestamp (primary uses $t$ to order requests from client $c$).

The primary node builds a block (also called proposal/pre-prepare message). The proposal message $\langle \textsc{Pre-prepare}, v,s,h,parent, QC_{nr},cmd \rangle$ include the view number $v$, block sequence (blockchain height) $s$, block hash $h$, parent block hash $parent$ and commands ($cmd$) to be executed. $QC_{nr}$ is the $QC$ for the negative response. After view change before proposing the first block new primary may want to confirm if there is a block that has been committed by at most $f$ nodes. If there is no such block then the primary collects
$2f+1$ negative responses $nr$ and builds $QC_{nr}$. Therefore, after view changes $QC_{nr}$ field may be used only once. More about this will be discussed in the View Change subsection.

The primary signs the proposal over the tuple $\langle v,s,h,parent \rangle$. The primary node then proposes the proposal to all nodes through broadcast during the pre-prepare phase (Algorithm \ref{Algorithm: Normal-Mode}, lines 18-22). Upon receipt of a proposal, a node $i$ verifies the primary signature and the message format. Each node then makes a safety check through $SafetyCheck()$ predicate. The $SafetyCheck()$ predicate accepts two parameters including the received block $b$ and the $highQC$ (latest $QC$ built from $2f+1$ votes received for the previous block). In the absence of failure, the $SafetyCheck()$ predicate makes sure the proposal extends the latest committed block. Which means if the sequence of the latest committed block $b'$ is $s$ (the block has $highQC$), then the sequence of the received block $b$ has to be $s+1$. If $b.s> b'.s+1$, then it either means the node $i$ is missing one or more committed blocks or the block proposal has invalid sequence. In the case of missing blocks the node will download the missing blocks and their respective $QC$s. In case, of invalid block sequence, node $i$ can send the block $b'$ metadata (with invalid sequence) as a proof to the network to trigger a view change.

If there is a failure or view change, then $SafetyCheck()$ predicate makes sure any uncommitted block seen by at least $f+1$ honest nodes (a block committed by at most $f$ honest nodes) has been re-proposed with its previous sequence number \footnote{More details about this in later section}. If the safety check is successful, then the node $i$ broadcasts its vote $\langle Vote, v,s,h,parent,i \rangle$ signed over tuples $\langle  v,s,h,parent,i \rangle$  (Algorithm \ref{Algorithm: Normal-Mode} lines 29-32). 

Upon receipt of $2f+1$ votes for a block $b$ each node commits the block (commit phase). Each node also builds a $QC$ from votes it received for the block $b$. This $QC$ will be used as a proof of commit during view change or for providing proof of commit required by application that is running over the top of the VBFT protocol. 
Each node that has committed the block will also send a $Reply$ message to clients whose transaction is included in the block (Algorithm \ref{Algorithm: Normal-Mode} lines 32-35). A client considers its request/transaction being committed if it receives $2f+1$ distinct $Reply$ messages.
Similarly, upon receipt of $2f+1$ votes, the primary commit the block and proposes the next block in its queue.

If a client $c$ does not receive $Reply$ message within some specific interval, then the client broadcasts its request to all nodes. Upon receipt of the request, if the request has been committed, then each node sends $Reply$ message. If it has not been proposed by the primary, nodes forward the request to the primary. 
If the primary still does not propose the block within the timeout period, then each node broadcasts a $\textsc{view-change}$ message to replace the primary. Similarly, if a block is not committed within a timeout period, the primary sends an invalid message or performs equivocation, nodes will trigger the view change process. The view change process will result in replacing the faulty primary. 

\begin{figure*}
    \centering
    \includegraphics[width=14cm,height=6cm]{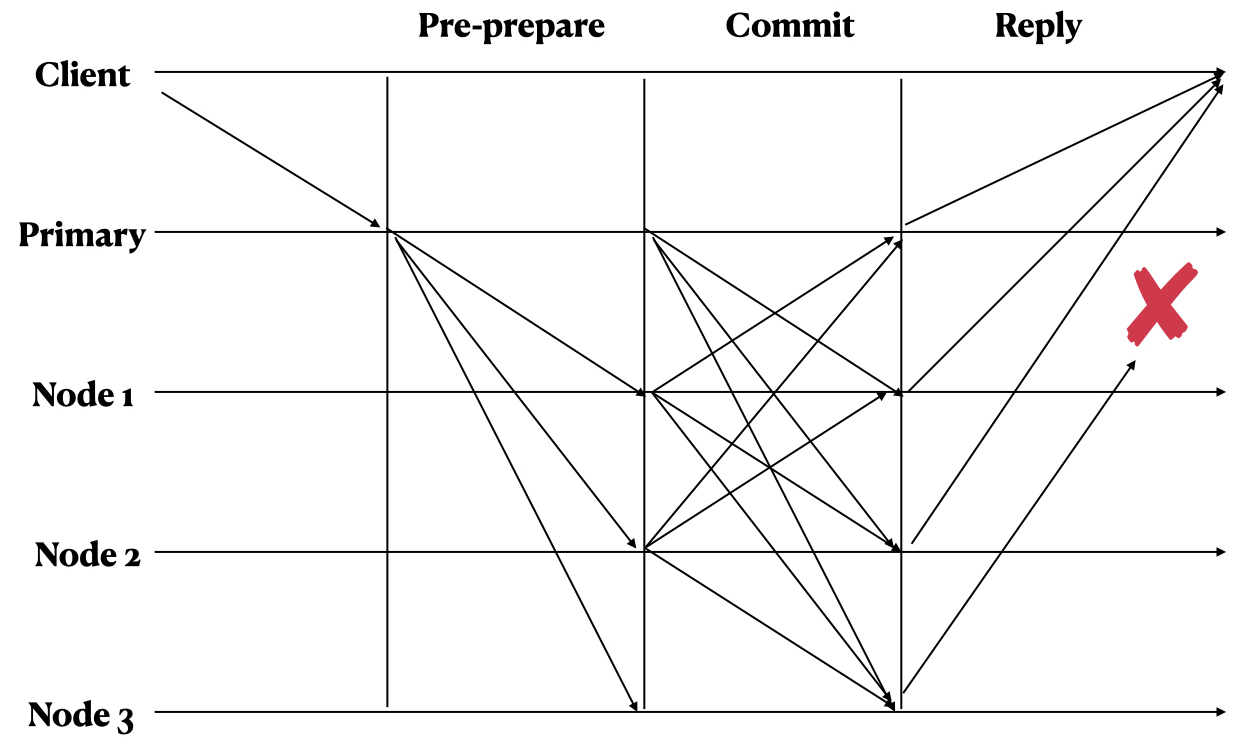}
    \caption{\textbf{VBFT Message Pattern}.}
    \label{fig:Message-pattern}
\end{figure*}

\subsection{View Change}   
    In BFT-based protocols, a view change is used to replace the failed primary with a new primary. Nodes use a deterministic function to map the view number to the primary ID. Primary can be chosen in round-robin manner \cite{HotStuff, Castro:1999:PBF:296806.296824,Jalal-Window} or randomly \cite{Coin-Flipping-by-Telephone,Coin-Flipping-First-Paper,Coin-Flipping-Popov,Random-Oracles-in-Constantinople,An-Optimally-Fair-Coin-Toss,Proteus1,hanke2018dfinity,jalalzai2020hermes}.
    
    In addition to selecting a new primary, view change also involves the synchronization of information among nodes. More specifically, the new primary needs to determine the sequence number that it will use for its first request to be proposed. The block proposed in the new sequence number just after the view change has to extend the latest block committed by at least one honest node \cite{SBFT,HotStuff,Castro:1999:PBF:296806.296824} (S-safety). In VBFT, the view change is different than the view change in the general BFT-based consensus protocol. 
    In VBFT, there are two ways to extend the latest committed block.
    If the latest proposed block has been committed by more than $f$ honest nodes, then the VBFT view change is similar to the view change in classic BFT consensus. But if the latest proposed block is committed by at most $f$ honest nodes or has not been committed by any node then the VBFT, requires an additional data retrieval phase (two communication steps) to address this case. Moreover, if at most $f$ honest nodes have committed a block, then the equivocation by a Byzantine primary may result in revocation of the latest committed block during view change. We, therefore, show that after each revocation the Byzantine primary will be blacklisted. Hence, after blacklisting $f$ nodes the protocol will become S-safe. Which means VBFT will guarantee that if a block is committed by one node, then eventually it will be committed by all nodes. Furthermore, as stated, this revocation does not affect clients and will not result in double-spending.

 \begin{algorithm}[ht]
\SetKwFunction{Function}{SendMsgDown}
\SetKwFunction{FuncGetMsg}{MSG}
\SetKwFunction{VoteForMsg}{VoteMSG}
\SetKwFunction{CreatePrepareMsg}{CreatePrepareMsg}
\SetKwFunction{QC}{GenerateQC}
\SetKwFunction{Matchingmsgs}{matchingMsg}
\SetKwFunction{MatchingQCs}{matchingQC}
\SetKwFunction{safeblock}{SafeProposal}
\SetKwFunction{aggregateMsg}{AggregateMsg}
\SetKwProg{Fn}{Func}{:}{}


\SetKwFor{ForEachin}{foreachin}{}{}
\caption{View Change for node $i$}
\label{Algorithm: View-Change}

\If{i is primary}{
\If{ $n-f$ $V$ (view change) msgs are received}{
 $aggQC \gets    CreateAggQC(V_{Set})$\\
 $NvMsg \gets$ CreateNVMsg(aggQC,v+1)
Broadcast $NvMsg$

}

\If{ $n-f$ $R$ (view change) msgs are received}{
 $qc_r \gets    CreateReadyMsg(R_{set})$
 
Broadcast $qc_r$
}

}
\If{i is normal node}{

\Upon{Receipt of NEW-VIEW Msg}{

Extract the latest $QC$ and $\beta$ (if present) 

Send Ready Msg $R$ to the primary

}

\Upon{Receipt of $qc_r$}{

Execute Algorithm \ref{Algorithm: Normal-Mode}

}

}
\end{algorithm}

   If a node does not receive a message from primary or a command is not executed within some interval, then the node broadcasts a $\textsc{view-change}$ message $V=\langle  \textsc{View-Change},v+1, qc, \beta, \perp, i\rangle$.  The $qc$ is the $QC$ for the latest committed block. Whereas, the $\beta$ is the header of the latest uncommitted block for which the node $i$ has voted.
   In case the node receives a proof of maliciousness from primary e.g, multiple proposals for the same sequence, invalid message etc, then it can include the proof too in the $\textsc{view-change}$ message ($V=\langle  \textsc{View-Change},v+1, qc, \beta, proof, i \rangle$) (Algorithm \ref{Algorithm: Normal-Mode} lines 37-39). Once the new primary receives $2f+1$ \textsc{View-Change} messages (may include its own), it prepares \textsc{new-view} message $\langle \textsc{new-view}, AggQC, v+1 \rangle$. Here $AggQC$ is the QC built from $2f+1$ \textsc{View-change} messages. The new primary then broadcasts $\textsc{new-view}$ message to all nodes (Algorithm \ref{Algorithm: View-Change} lines 2-5).
   Each node that receives  the $\textsc{new-view}$ can extract important information out of it  as described below:
   \begin{enumerate}
       \item The $QC$ type field in the $\textsc{view-change}$ message in $2f+1$ $QC$s in $AggQC$, is used to extract the latest committed block by at least $f+1$ honest nodes. Since if $f+1$ honest nodes commit the latest block, then any combination of $2f+1$, $\textsc{view-change}$ messages will have at least one $QC$ for the committed block by at least $f+1$ nodes (for more details please refer to Lemmas \ref{Lemma:R-safe-Normal} and \ref{Lemma: R-safety-ViewChange}). 
       Though the $QC$ of a block committed by at most $f$ nodes may also end up in $AggQC$, it is not guaranteed. If it did, then the new primary will extend the block committed by at most $f$ nodes.

        \item The $\beta$ field in the $2f+1$ $\textsc{view-change}$ messages in $AggQC$
        determine the latest committed block by at most $f$ honest nodes (In the absence of equivocation). If a block is  committed by less than $f+1$ honest nodes, then its $QC$ may not end up among $2f+1$ $\textsc{view-change}$ messages in $AggQC$, but since $2f+1$ nodes (of which $f+1$ are honest) have voted for that block therefore, at least one $\beta$ in $AggQC$ will be from the latest committed block. There is also the possibility that the block/request of $\beta$ may have not been committed by a single honest node. In that case, the next primary will have to collect the proof that the block from $\beta$ has not been committed.
   \end{enumerate}

      Therefore, upon receipt of $\textsc{new-view}$ message from the new primary, each node extracts the latest $qc$ as well as $\beta$ (if there is any) as shown in Algorithm \ref{Algorithm: View-Change}, lines 12-15.
      Each node then sends $Ready$ message to the new primary. The new primary aggregates $Read$ messages into the $QC$ for the ready message ($QC_{r}$) and broadcasts back. Upon receipt of $QC_{r}$ each node is now ready to take part in the next view.
       
       \textbf{Recovering the block committed by at most $f$ honest nodes using $\beta$.} 
       As stated $\beta$ in the view change $QC$ ($QC_v$) is used to recover any block that has been committed by at most $f$ nodes (in the absence of equivocation). But if there is not such a block, then the primary has to provide proof that there is no block such that it has been committed by at most $f$ honest nodes.
       
       If $\beta$ field in all $\textsc{view-change}$ messages in $AggQC$ is nil, then it means no block has been committed by at most $f$ honest nodes. In other words, the latest proposed block has been committed by at least $f+1$ honest nodes. Therefore, the new primary will extend the block of the latest $QC$ in the  $AggQC$.  In case there is a block $b$ that has been committed by $x$ number of nodes such that $f \geq x \geq 1$, then the $QC$ for block $b$ may or may not end up in $AggQC$. In this case 
       the protocol relies on $\beta$ (for block $b$) to make sure the latest block gets extended after the view change.
       The $\beta$ is only relevant if its sequence is greater than the sequence of latest $QC$  ($highQC$) in $AggQC$ such that $\beta.s == highQC.s+1$.
        If at least one $\textsc{view-change}$ message in $AggQC$ has the $QC$ for the latest committed block $b$ ($highQC$), then the new primary will propose a block that will extend the block $b$. In this case $\beta$ is not relevant since $\beta.s \leq highQC.s$.
        It is also possible that no $QC$  for block $b$ in the $AggQC$ of the $\textsc{new-view}$ message is collected during view change. But it is guaranteed that at least one $\beta$ in $AggQC$ will be from the block $b$. In this case, if the new primary does not have the payload for the $\beta$ from block $b$, 
       then the new primary will be able to retrieve the contents of block $b$ from other nodes in the network. The new primary then re-proposes the same block $b$ at the same sequence again as the first proposal. Hence, in case if there is a non-nil $\beta$, then the new primary has to make sure if there is a block that has been committed (by at most $f$ honest nodes), then it should be recovered and re-proposed. It is also possible that a block may not be committed by any honest node but its $\beta$ has been received by the new primary, hence, included in the $AggQC$.
       In this case, the new primary will try to retrieve the block for $\beta$ (when $\beta.s==highQC.s+1$). If the block is retrieved, the primary will propose it for the same sequence. If the block is not retrieved then the primary has to provide proof to all honest nodes that such a block has not been committed by any honest node. The proof will allow the new primary to extend the block with the latest $QC$ (instead of the block of $\beta$) in the $\textsc{view-change}$ message.
       
       Therefore as it can be seen in the Algorithm \ref{Algorithm: Normal-Mode}, if there is a $\beta$ such that $\beta.s==highQC.s+1$ and the primary has the payload for $\beta$, then it can propose the $\beta$'s payload (lines 4-6). If the $\beta$ payload is not present, then the algorithm requires to recover the payload (block) (lines 7-11). The primary broadcasts a request to retrieve the payload for $\beta$. Now if $\beta$ has been committed by at least one node, then this means $2f+1$ nodes have voted for it. Which means $2f+1$ nodes have the payload for $\beta$. Therefore when nodes (at least $2f+1$) respond to the request for $\beta$ payload, at least one response will have the payload. Upon, receipt of the payload the new primary proposes the payload for $\beta$ again (lines 12-15).
       
       There is also a possibility that the respective block for the $\beta$ may not be committed by any honest node. In that case, the primary may not receive the $\beta$ payload. But it is guaranteed that the new primary will receive $2f+1$ negative responses from distinct nodes. These $2f+1$ negative response indicates that the block for the $\beta$ has not been committed by any honest node. To prove that the respective block for $\beta$ has not been committed, the new primary attaches the $QC$ for negative response ($qc_{nr}$) to its first proposal for the view. If there is a $\beta$ and the new primary neither proposes its payload nor includes its $qc_{nr}$ during the first proposal, honest nodes will trigger view change. Therefore,  in the absence of equivocation, the VBFT guarantees S-safety. 
       Therefore, the presence of the latest block $QC$ or $\beta$ is guaranteed in $AggQC$ (further details in Section \ref{Section:Proof}). Hence, the new primary will extend the latest committed block (in the absence of equivocation).
       
       \textbf{Addressing the problem of Equivocation (Byzantine Primary).}
       Since  VBFT achieves consensus in just two steps of communications, therefore it allows Byzantine primary to perform equivocation (proposing multiple blocks for the same sequence).
       In the case of equivocation if one of the proposed blocks is committed by $f+1$ honest nodes then 
       the equivocation will not result in revocation of the block committed by $f+1$ honest nodes.
       Hence, equivocation will not be able to cause any safety or liveness issues. But 
       the problem arises if at most $f$ honest nodes commit one of the equivocated blocks. In this case, there is a possibility that the block committed by at most $f$ nodes will be revoked. To better explain how revocation takes place, let us assume that there are two equivocated blocks $b_1$ and $b_2$, such that $b_1$ has been committed by $f$ honest node. During the view change, the new primary receives $\textsc{view-change}$ messages from $2f+1$ nodes (assuming that the $QC$ for $b1$ is not among the $2f+1$ $\textsc{view-change}$ messages). 
       The new primary may receive $\textsc{view-change}$ messages  containing
       $\beta$ for both blocks $b_1$ ($\beta_1$) and $b_2$ ($\beta_2$).
       Now if the new primary chooses to recover the payload for $\beta_1$ and repropose the payload for $\beta_1$, then the S-safety does not break. But if the new primary choose to extend $\beta_2$, then up to $f$ nodes that have committed $b_1$, have to revoke $b_1$. Hence, the S-safety does not hold if equivocation is performed. Since at most $f$ honest nodes have committed block $b_1$, therefore, the clients have not received $2f+1$ $Reply$ messages. As a result, clients still do not consider transactions in the block $b_1$ as committed. Therefore, the revocation does not result in double-spending. A Client $c$ can only get $2f+1$ $Reply$ messages for a transaction if its subsequent block is committed by at least $f+1$ honest nodes. As stated, if a block is committed by $f+1$ honest nodes then it will not be revoked (See safety proof for more details). Therefore, to discourage Byzantine nodes equivocated primaries will be blacklisted. As a result, once all Byzantine nodes are blacklisted, there won't be any equivocation.

\textbf{View Change Optimization}
View change in classic BFT protocols \cite{SBFT,Castro:1999:PBF:296806.296824} can have quadratic cost of signature verification in $AggQC$.
Recently there have been proposals on improving the message and computational costs from signatures during the view change \cite{No-Commit-Proofs, jalalzai2021fasthotstuff}. We preferred to use the solution in cite{jalalzai2021fasthotstuff} as it is straightforward and easy to implement.
In Fast-HotStuff \cite{jalalzai2021fasthotstuff}, authors show that this quadratic cost of signature verification can be reduced linear. The technique they suggested is to verify the aggregated signature of the $AggQC$ along with the aggregated signature of the $\textsc{view-change}$ message with the highest $QC$. This is sufficient to guarantee the validity of the $AggQC$.  The verification of the aggregated signature of the $AggQC$ verifies that it contains $\textsc{view-change}$ message from $2f+1$ distinct nodes. Whereas, the verification of the signature of the latest $QC$ makes sure the latest $QC$ and potential latest $\beta$ messages are valid. Thus, each node does not require to verify aggregated signatures from the remaining $2f$ $QC$s. Although this solution does not help with the message complexity during the view change, in practice the collective size of view change messages is negligible compare to the block size. Hence, their effect on performance is also negligible \footnote{If the network size is too large or the block size is too small, then view change message complexity may affect the performance.}.

\section{Black Listing Byzantine Primary}    
\label{Section: Blacklisting}
       An equivocation attack is mainly performed to break the protocol safety. Although it is a strong type of attack, the good news is that it can easily be detected. It is also important that the blacklisting decision is consistent throughout the network.  Therefore, the blacklisting of a Byzantine node is done through a transaction (request). This allows the network to take a decision on blacklisting and remain consistent.
       
       When an honest node $i$ receives multiple blocks for the same sequence, it knows that the primary is Byzantine and is trying to perform equivocation. Therefore, broadcasts the proof (which includes the header of both equivocated blocks)  included in the view change message. This will trigger a view change as explained in the algorithm \ref{Algorithm: View-Change}. After the view change, honest nodes expect the new primary to include the blacklisting transaction in the block proposal. If the new primary does not propose a blacklisting transaction within the timeout period, then honest nodes forward the proof to the new primary and wait again for the blacklisting transaction to be proposed. If by the timeout the blacklisting transaction is not proposed by the primary, then honest nodes trigger view change by broadcasting $\textsc{view-change}$ messages. 
       This process will continue until the new primary proposes the transaction containing a blacklisting transaction.
       
       It is also possible that a node is no longer the primary but another node $i$ discovers proof of equivocation against it. In this case, the node $i$ still broadcasts the proof to all nodes. Upon receipt of the proof a $j$ checks if the node that has performed equivocation has been blacklisted or not. If the equivocator is blacklisted then the proof is ignored by the node $j$. If not then it forwards the proof to the current primary and waits for the proposal of the blacklisting transaction. If the primary does not propose the blacklisting transaction within a timeout interval, then node $j$ broadcast $\textsc{view-change}$ message. Therefore, eventually, there will be an honest primary that will propose the blacklisting transaction. 
       
\section{Proof of Correctness}
\label{Section:Proof}
In this section, we provide proof of safety and liveness for the VBFT protocol. We consider $S_h$ as the set of honest nodes in the network and $2f+1 \leq |S_h| \leq n$.
\subsection{Safety}
As stated VBFT satisfies S-safety during the normal protocol. VBFT also satisfies S-safety during a view change in the absence of equivocation. VBFT protocol may fall back to R-safety. But once all byzantine nodes (at most $f$) get blacklisted then, the protocol will always guarantee S-safety. Below we provide lemmas related to the VBFT safety.
 \begin{lemma}
       \label{Lemma:S-safe-Normal}
        VBFT is S-Safe during normal execution of the protocol.
       \end{lemma}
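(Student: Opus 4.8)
The plan is to reduce S-safety during normal execution to a standard quorum-intersection argument, exploiting the fact that throughout a single normal execution the view number is fixed. First I would fix the view $v$ in which the normal execution proceeds and recall the commit rule of Algorithm~\ref{Algorithm: Normal-Mode} (lines 33--35): an honest node commits a block at sequence $s$ only after collecting $2f+1$ votes carrying view number $v$ and sequence $s$ for that block. Thus if an honest node commits $b$ at height $s$ in view $v$, there is a quorum $Q_b$ of $2f+1$ distinct nodes that voted for $b$ at $(v,s)$; likewise, were a second block $b' \neq b$ committed at $s$ during the same normal execution, a quorum $Q_{b'}$ of $2f+1$ nodes would have voted for $b'$ at $(v,s)$.

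Next I would invoke quorum intersection: since $n = 3f+1$, any two sets of size $2f+1$ meet in at least $2(2f+1) - (3f+1) = f+1$ nodes, so $|Q_b \cap Q_{b'}| \ge f+1$. Because at most $f$ nodes are Byzantine, at least one node $w \in Q_b \cap Q_{b'}$ is honest. The crux is then to argue that $w$ cannot have voted for both $b$ and $b'$. For this I would establish the invariant that an honest node casts at most one vote per $(v,s)$ pair: after sending its vote for a proposal at $(v,s)$ (line 31) the node records that it has voted at this height and will not vote again in the same view, and moreover if it ever observes two distinct proposals for the same sequence it treats this as proof of maliciousness and triggers a view change (lines 37--39), which by definition ends the normal execution. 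Hence, within a single normal execution $w$ signs at most one of $b$ and $b'$, contradicting $w \in Q_b \cap Q_{b'}$, and S-safety during normal execution follows.

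The main obstacle I anticipate is handling equivocation \emph{inside} normal execution: a Byzantine primary may broadcast both $b$ and $b'$ for sequence $s$ in view $v$, so I cannot simply appeal to uniqueness of the proposal. The argument must therefore rest not on uniqueness of the proposal but on the honest voting rule, namely that each honest node contributes its single vote to at most one of the two blocks, so that no honest node can appear in both quorums. Once this "at most one vote per $(v,s)$" invariant is pinned down, the contradiction is immediate. The genuinely cross-view case, in which $b'$ is committed only after a view change, is deliberately outside the scope of this lemma and is deferred to the view-change analysis in Lemmas~\ref{Lemma:R-safe-Normal} and~\ref{Lemma: R-safety-ViewChange}.
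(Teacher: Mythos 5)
Your proof is correct and follows essentially the same route as the paper's: both derive a quorum intersection of size at least $f+1$ between the two voting sets and conclude that some honest node would have had to vote for two distinct blocks at the same sequence, contradicting the one-vote-per-sequence rule. Your version merely makes explicit the "at most one vote per $(v,s)$" invariant that the paper invokes in a single sentence.
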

       \begin{proof}
       This lemma can be proved by contradiction. Let us assume that a block $b_1$ has been committed by a single  node $i$ such that $i \in S_h$ in the sequence $s$. This means $2f+1$ nodes in a set $S_1$ have voted for block $b_1$. 
       Similarly an honest node $j$ ( $j \in S_h$ )  commits a block $b_2$ in the sequence $s$. A set of $2f+1$ nodes ($S_2$) have voted for the block $b_2$. Since $n=3f+1$, we have $S_1 \cap S_2 \geq f+1$. This means there is at least one honest node that has voted both, for the block $b_1$ and $b_2$. But this is impossible since an honest node only vote once for a single block at the same sequence. Hence, VBFT is S-safe during normal execution of the protocol.
       
       \end{proof}
       
       \begin{lemma}
   \label{Lemma:R-safe-Normal}
   VBFT is R-Safe during normal execution of the protocol.
   \end{lemma}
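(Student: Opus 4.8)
The plan is to obtain R-safety as an immediate consequence of the S-safety just established in Lemma \ref{Lemma:S-safe-Normal}, since R-safety is the logically \emph{weaker} of the two notions. First I would make the comparison of the two definitions explicit: S-safety triggers its no-conflict conclusion as soon as a \emph{single} honest node commits a block at sequence $s$, whereas R-safety demands the same conclusion only under a \emph{stronger} hypothesis, namely that $f+1$ honest nodes commit at $s$. Note here that the ``$2f+1$ nodes'' phrasing in the definition is equivalent to ``$f+1$ honest nodes'': since at most $f$ of any $2f+1$ committers are Byzantine, at least $f+1$ of them are honest. Hence every situation satisfying the hypothesis of R-safety also satisfies the hypothesis of S-safety.

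Concretely, I would argue as follows. Suppose $f+1$ honest nodes commit a block $b_1$ at sequence $s$ during normal execution. In particular at least one honest node commits $b_1$ at $s$, so Lemma \ref{Lemma:S-safe-Normal} already guarantees that no block other than $b_1$ is ever committed at $s$. This is exactly the conclusion required for R-safety, so the lemma follows with essentially no further work.

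For completeness I would also record the self-contained quorum-intersection argument, mirroring the proof of Lemma \ref{Lemma:S-safe-Normal}: if $f+1$ honest nodes commit $b_1$ at $s$ then a voting quorum $S_1$ of $2f+1$ distinct nodes endorsed $b_1$, and if some node additionally committed a conflicting $b_2 \neq b_1$ at $s$ then a quorum $S_2$ of $2f+1$ distinct nodes endorsed $b_2$; since $n = 3f+1$ we have $|S_1 \cap S_2| \ge f+1$, forcing some honest node to have voted for two distinct blocks at the same sequence, which the protocol forbids.

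Since the statement is an immediate weakening, there is no genuine obstacle here; the only point requiring care is the \emph{direction} of the implication — one must derive R-safety from S-safety and not the reverse. The entire content is the observation that strengthening the commit hypothesis (from one honest node to $f+1$) can only make a safety property easier to satisfy, so S-safety during normal execution subsumes R-safety during normal execution.
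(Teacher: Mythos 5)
Your proposal is correct and follows the same route as the paper: the paper's proof simply observes that S-safety (Lemma \ref{Lemma:S-safe-Normal}) implies R-safety since the latter has a strictly stronger hypothesis. Your added explanation of why the implication goes in that direction, and the optional direct quorum-intersection argument, only make explicit what the paper leaves implicit.
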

  \begin{proof}
    From Lemma \ref{Lemma:S-safe-Normal}, we know that VBFT is S-safe during normal mode. Therefore, we can conclude that if VBFT is S-safe during the normal mode, then it is also R-safe during the normal mode.
  \end{proof}
  
  \begin{lemma}
     VBFT is R-Safe during view change.
     \label{Lemma: R-safety-ViewChange}
  \end{lemma}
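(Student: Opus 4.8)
The plan is to prove R-safety during view change by a quorum-intersection argument showing that the $QC$ of any block committed by $f+1$ honest nodes is guaranteed to survive into the new view's $AggQC$, which in turn forces the new primary to extend that block. Concretely, I would start by assuming, toward the conclusion, that a block $b$ is committed at sequence $s$ by a set $H_b \subseteq S_h$ of honest nodes with $|H_b| \geq f+1$ (equivalently, $2f+1$ nodes committed $b$, of which at least $f+1$ are honest). The goal is to show that no block $b' \neq b$ can ever be committed at sequence $s$ in any subsequent view.

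First I would fix an arbitrary view change leading to a new view $v'$ and consider the $AggQC$ assembled by the new primary from $2f+1$ \textsc{view-change} messages sent by a set $Q$ of distinct nodes. Since $|Q| = 2f+1$, $|H_b| \geq f+1$, and $n = 3f+1$, inclusion--exclusion gives $|Q \cap H_b| \geq (2f+1) + (f+1) - (3f+1) = 1$. Hence at least one honest node that committed $b$ contributes its \textsc{view-change} message to the $AggQC$, and because that node holds the $QC$ for $b$ at sequence $s$, this $QC$ is necessarily present inside the $AggQC$. This is the decisive step, and it is the same counting that separates R-safety from S-safety: with only $f$ honest committers the intersection could be empty, so the $QC$ might be dropped, whereas with $f+1$ committers it cannot. (The $\beta$-recovery path of the protocol is what is invoked in the weaker $\leq f$ case and is not needed here.)

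Next I would argue that the presence of $b$'s $QC$ at sequence $s$ in $AggQC$ pins sequence $s$ to $b$. The $highQC$ extracted from $AggQC$ has sequence at least $s$; if it equals $s$ it is $b$'s $QC$, and if it is larger its block chains back through sequence $s$ to $b$ by the $b.s = qc.s+1$ rule enforced in \texttt{SafetyCheck}. Either way an honest new primary extends $b$ (directly or transitively), and every honest node's \texttt{SafetyCheck} rejects any proposal that would place a different block at sequence $s$. Committing a conflicting $b'$ at $s$ would require $2f+1$ votes, hence at least $f+1$ honest votes, but no honest node will cast such a vote; therefore $b'$ cannot be committed at $s$ in view $v'$.

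Finally, to obtain the ``ever'' clause of the definition I would wrap the above in an induction over views: Lemma \ref{Lemma:S-safe-Normal} supplies the base case within the view in which $b$ was committed, and the inductive step observes that the $f+1$ honest nodes in $H_b$ retain $b$'s $QC$ across view changes and re-supply it in every future $AggQC$, so the intersection argument applies verbatim in each subsequent view. I expect the main obstacle to be stating this induction cleanly --- in particular, ruling out that a conflicting $QC$ at sequence $s$ is ever formed in an intermediate view, which is precisely what the inductive hypothesis combined with the $|Q \cap H_b| \geq 1$ bound forbids --- rather than the arithmetic, which is routine.
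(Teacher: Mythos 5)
Your proposal is correct and rests on exactly the same decisive step as the paper's own proof: the quorum-intersection count $(2f+1)+(f+1)-(3f+1)\geq 1$ showing that at least one honest committer of $b$ contributes its \textsc{view-change} message (hence $b$'s $QC$) to the $AggQC$, forcing the new primary to extend $b$. Your version is somewhat more complete --- the paper stops at ``the new primary will have to extend $b_1$,'' whereas you carry the argument through the \texttt{SafetyCheck} vote-counting and an induction over subsequent views to justify the ``no other block will \emph{ever} be committed'' clause --- but this is an elaboration of the same argument, not a different route.
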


\begin{proof}
  Let us assume that a block $b_1$, is committed just before view change by a set of nodes $S_1 \in S_h$ of size $f+1$. During view change $2f+1$ nodes (out of which at least $f+1$ are honest nodes) in set $S_2$ send their $\textsc{view-change}$ messages to the new primary. The new primary aggregates $2f+1$ $\textsc{view-change}$ messages into $\textsc{new-view}$ message
  as shown in (Algorithm \ref{Algorithm: View-Change}).
  Since $S_1 \cap S_2 = S$ and $|S| \geq 1$, therefore there is at least one honest node that have committed block $b_1$ and has its $\textsc{view-change}$ message $V \in$ $\textsc{new-view}$.    
  This means when the new primary broadcasts the 
  $\textsc{new-view}$ message, every receiving node will know that block $b_1$ is the latest committed block. Therefore, the new primary will have to extend the block $b_1$. 
\end{proof}

\begin{lemma}
    VBFT is S-Safe during view change when there is no equivocation.
         \label{Lemma: S-safety-ViewChange-NoEq}

\end{lemma}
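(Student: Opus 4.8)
The plan is to argue by contradiction in the same spirit as Lemma~\ref{Lemma: R-safety-ViewChange}, but now starting from the weaker hypothesis required for S-safety that a \emph{single} honest node commits. Suppose, toward a contradiction, that during (or immediately after) a view change with no equivocation, one honest node $c$ commits a block $b_1$ at sequence $s$, yet some block $b_2 \neq b_1$ is later committed at the same sequence $s$. Since $c$ committed $b_1$, it collected $2f+1$ votes for $b_1$, so the voter set $S_1$ has $|S_1| = 2f+1$ and contains at least $f+1$ honest nodes. Each honest voter of $b_1$ keeps $b_1$'s header as its $\beta$ (and retains $b_1$'s payload), while $c$ additionally holds $QC_{b_1}$. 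The goal is to show that the new primary is \emph{forced} to re-propose $b_1$ at sequence $s$, so that no honest node can ever vote for a different $b_2$ at $s$, contradicting the assumption that $b_2$ gathers $2f+1$ votes.

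First I would establish that $b_1$'s information survives into the $AggQC$. The new primary builds the $\textsc{new-view}$ message from a set $S_2$ of $2f+1$ $\textsc{view-change}$ messages. By quorum intersection, $|S_1 \cap S_2| \ge (2f+1)+(2f+1)-(3f+1) = f+1$, so at least one \emph{honest} node lies in $S_1 \cap S_2$; its $\textsc{view-change}$ message carries either $QC_{b_1}$ or $\beta_1$ for $b_1$. I then split into two cases. If $QC_{b_1}$ appears in $AggQC$, then $highQC$ sits at sequence $s$ and the new primary extends $b_1$ directly, exactly as in Lemma~\ref{Lemma: R-safety-ViewChange}. Otherwise only $\beta_1$ is present; because $b_1$ was voted as an extension of a certified block at $s-1$, the highest $QC$ in this case sits at $s-1$, so $\beta_1.s = highQC.s+1$ and $\beta_1$ is \emph{relevant}, triggering the recovery branch of Algorithm~\ref{Algorithm: Normal-Mode}.

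The hard part will be ruling out the one escape route the primary has in the recovery branch: fabricating a non-commitment proof $qc_{nr}$ and extending a different block instead of $b_1$. Here I would invoke the counting argument that underlies two-step commitment. Since $\ge f+1$ honest nodes voted for $b_1$, they all hold its payload and answer the primary's retrieval request positively; hence at most $n-(f+1)=2f$ distinct nodes can send a negative response, which is strictly fewer than the $2f+1$ needed to build $qc_{nr}$. Thus the primary cannot prove non-commitment and must recover and re-propose $b_1$ at sequence $s$. Finally, the no-equivocation hypothesis guarantees that $b_1$ is the unique block header circulating at sequence $s$, so no competing $\beta$ or $QC$ can divert the primary to a different block; combined with the $SafetyCheck()$ predicate, which forbids any honest node from voting for a $b_2 \neq b_1$ at $s$, the block $b_2$ can never collect $2f+1$ votes. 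This contradiction establishes S-safety during view change in the absence of equivocation.
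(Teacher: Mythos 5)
Your proposal follows the same core route as the paper's proof: both hinge on the quorum-intersection count $|S_1 \cap S_2| \ge f+1$ to guarantee that at least one honest voter of $b_1$ contributes its $\textsc{view-change}$ message to the $AggQC$, and both then split on whether that contribution is $QC_{b_1}$ (extend directly) or merely $\beta_1$ (trigger the recovery branch). The paper stops there, asserting that the primary ``can retrieve the payload for $\beta$ and propose it for the same sequence.'' You go one step further, and it is a step the paper's proof actually needs: you close off the primary's escape route through the negative-response mechanism. Since at least $f+1$ honest nodes voted for $b_1$ and therefore hold its payload, at most $n-(f+1)=2f$ distinct nodes can ever answer the retrieval request negatively, so a (possibly Byzantine) new primary cannot assemble the $2f+1$ negative responses required for $qc_{nr}$ and is thereby \emph{forced} to re-propose $b_1$ rather than merely \emph{enabled} to. Combined with your observation that, absent equivocation, $highQC$ sits at $s-1$ so $\beta_1.s = highQC.s+1$ makes $\beta_1$ relevant, and that $SafetyCheck()$ blocks honest votes for any $b_2 \neq b_1$ at $s$, the contradiction framing is sound. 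Your version is a more complete argument for the same lemma; the paper's version leaves the non-commitment-proof loophole implicit (it is only addressed informally in the view-change prose), whereas you make the counting explicit.
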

\begin{proof}
  A block $b$ is committed by a single honest node $i$ at the sequence $s$ just before the view change. This means a set of nodes $S_1 \geq 2f+1$ have voted for the block $b$. During the view change the new primary collects $\textsc{view-change}$ messages from another set of $2f+1$ nodes ($S_2$) into $\textsc{new-view}$ message. There is no guarantee that node $i$'s $\textsc{view-change}$ message is included in the $\textsc{new-view}$ message built from $\textsc{view-change}$ message of nodes in $S_2$. If  node $i$'s $QC$ for block $b$ from its $\textsc{view-change}$ message is included in the $\textsc{new-view}$ message, then the new primary will simply propose a block that extends the block $b$. But in case the $QC$ for block $b$ is not included in $\textsc{new-view}$, then block $b$ can be recovered using information in the $\beta$ field of the $\textsc{new-view}$ messages in $\textsc{new-view}$ message.
 Each node in $S_2$ has included the latest $\beta$ (block header it has voted for) in its $\textsc{view-change}$ message it has sent to the primary. The primary has aggregated these messages into $\textsc{new-view}$ message. Since $n=3f+1$ therefore, $S_1 \cap S_2 \geq f+1$. This means there is at least one honest node $j$ in the intersection of $S_1$ (that has voted for the block $b$) and $S_2$ (has its $\textsc{view-change}$ message $V_j \in \textsc{new-view}$ message such that and $V_j.\beta.s=highQC\footnote{$highQC$ is the $QC$ with highest sequence (latest $QC$) in the $AggQC$. $AggQC$ is the set of $QC$s in the $\textsc{new-view}$ message.}.s+1$). 
  Therefore, the new primary can retrieve the payload for $\beta$ (block $b$'s header) and propose it for the same sequence $s$ as shown in Algorithm \ref{Algorithm: Normal-Mode}. 
\end{proof}

\begin{lemma}
    After blacklisting $f$ Byzantine nodes, the VBFT will guarantee S-safety.
     \label{Lemma: S-safety-After-BlackListing}
\end{lemma}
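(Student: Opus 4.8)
The plan is to show that the \emph{only} mechanism by which S-safety can fail is equivocation by a Byzantine primary during a view change, and then to argue that blacklisting eliminates this mechanism once all $f$ Byzantine nodes have been removed. First I would partition the protocol's behavior into normal mode and view change. By Lemma \ref{Lemma:S-safe-Normal}, S-safety already holds unconditionally during normal execution, so no violation can originate there. By Lemma \ref{Lemma: S-safety-ViewChange-NoEq}, S-safety also holds at every view change \emph{in the absence of equivocation}, since a block $b$ committed by even a single honest node is recovered and re-proposed through either its $QC$ or its $\beta$ field in the $\textsc{new-view}$ message. Hence the entire burden reduces to ruling out equivocation: if no primary ever proposes two distinct blocks $b_1, b_2$ for the same sequence $s$, then S-safety holds everywhere.

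Second, I would invoke the blacklisting mechanism of Section \ref{Section: Blacklisting}. Equivocation is detectable because the two conflicting proposals are both signed by the primary, giving any honest node an unforgeable proof of maliciousness; such a node embeds the proof in its $\textsc{view-change}$ message and drives a view change, after which an (eventually honest) primary commits the blacklisting transaction. Since only a Byzantine primary can equivocate, and the view-to-identity map is deterministic, each equivocating identity is removed from future primary selection. Because there are at most $f$ Byzantine nodes, at most $f$ distinct identities can ever equivocate; once all of them are blacklisted, every remaining node is honest and no primary will propose conflicting blocks again.

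Third, I would close the argument by re-applying the earlier lemmas in the post-blacklisting regime. With equivocation now impossible, Lemma \ref{Lemma: S-safety-ViewChange-NoEq} yields S-safety at every subsequent view change, and Lemma \ref{Lemma:S-safe-Normal} yields S-safety in every normal execution; combining the two gives S-safety for the whole protocol after the $f$-th blacklisting. I would also note that during the transient window before all $f$ nodes are blacklisted, Lemmas \ref{Lemma:R-safe-Normal} and \ref{Lemma: R-safety-ViewChange} guarantee that any block committed by $f+1$ honest nodes (equivalently, any client-confirmed block with $2f+1$ replies) is never revoked, so the fallback to R-safety never causes double-spending.

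The hard part will be rigorously establishing the \emph{liveness} of blacklisting: I must show that a detected equivocation always results in the culprit being blacklisted in finite time, even though the intervening view changes may themselves be orchestrated by Byzantine primaries. This relies on the standard partial-synchrony argument that, after the timeout grows past the unknown message-delay bound and the rotation cycles through at most $f$ faulty primaries, an honest primary is eventually selected and commits the blacklisting transaction. A secondary subtlety I would address is that a single Byzantine primary may equivocate more than once before it is blacklisted; this does not loosen the bound, since each distinct Byzantine identity is blacklisted at most once and so the total number of equivocating identities is capped at $f$, guaranteeing termination of the transient phase.
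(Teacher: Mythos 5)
Your proposal is correct and follows essentially the same route as the paper: reduce any possible S-safety violation to equivocation via Lemmas \ref{Lemma:S-safe-Normal} and \ref{Lemma: S-safety-ViewChange-NoEq}, then observe that the blacklisting mechanism of Section \ref{Section: Blacklisting} removes each of the at most $f$ equivocating identities, after which the earlier lemmas apply unconditionally. Your treatment is somewhat more careful than the paper's own one-paragraph argument --- in particular your explicit handling of the liveness of the blacklisting transaction and of repeated equivocation by a single identity --- but it is an elaboration of the same proof, not a different one.
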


\begin{proof}
A Byzantine primary may perform equivocation by proposing multiple blocks for the same sequence. As stated in Sections \ref{Section: Protocol} and \ref{Section: Blacklisting} an equivocation attack can easily be detected and the culprit node can be blacklisted. Therefore, upon blacklisting $f$ number of Byzantine nodes, there won't be any Byzantine node left to perform equivocation when selected as a primary. Since there will be no equivocation anymore therefore based on the Lemmas  \ref{Lemma:S-safe-Normal}, \ref{Lemma:R-safe-Normal}, \ref{Lemma: R-safety-ViewChange}, and \ref{Lemma: S-safety-ViewChange-NoEq}, VBFT protocol is S-safe after blacklisting $f$ equivocating nodes. 
\end{proof}

The above lemmas prove that the VBFT protocol is always R-safe. Fast-B4B is S-safe during the normal mode and view change (when there is no equivocation). After, blacklisting $f$ Byzantine primaries, due to equivocation the VBFT protocol will always guarantee S-safety.

\subsection{Liveness}
A consensus protocol has guaranteed that it makes progress eventually and will not stall indefinitely.  VBFT replaces a primary through a view change if the primary fails to make progress. VBFT borrows techniques from PBFT \cite{Castro:1999:PBF:296806.296824} to provide liveness. The first technique includes an exponential back-off timer where the timeout period is doubled after each view change to give the next primary enough time to achieve the decision on a request/block 

Secondly,  a node broadcasts $\textsc{view-change}$ message if it receives $f+1$ $\textsc{view-change}$ messages higher than its current view from distinct nodes. This guarantees that at least one message is from an honest node. Moreover, if a node receives $f+1$ $\textsc{view-change}$ messages for different views greater than its current view, it will broadcast a $\textsc{view-change}$ message for the smallest view of the $f+1$ $\textsc{view-change}$ messages.

Third, to prevent the protocol to be indefinitely in the state of view change, a node will not broadcast $\textsc{view-change}$ message and trigger view change if it receives at most $f < n/3$ number of $\textsc{view-change}$ messages. Therefore, Byzantine nodes cannot trigger a view change until at least one honest not also broadcast a $\textsc{view-change}$ message.

If the primary is selected in round robin-manner, then after at most $f$ Byzantine primaries, an honest primary will be selected. Therefore, progress will be made eventually. Similarly, if the primary is being selected randomly then the probability of a bad event (selection of a Byzantine primary) is $P_b=1/3$. By considering such a bad event as the Bernoulli trial, we have the probability of a bad event for $k$ consecutive views as $P_b^k$. This shows that $P_b^k$ quickly approaches $0$ as $k$ increases. Therefore, eventually, an honest node will be selected as primary so that the protocol can make progress.


\section{Related Work}
\label{Section: Related Work}

Fast B4B uses block recovery mechanism that  Hermes BFT \cite{jalalzai2020hermes}  employed  to retrieve a block committed by at most $f$ nodes. During block recovery Hermes primary may collect proofs that the most recent block is not committed. Hermes is designed for optimal bandwidth usage rather than latency. It has five communication steps compare to the two in Fast B4B. The concept of collecting proofs for uncommitted block during a view change is also used in a recent concurrent manuscript \cite{No-Commit-Proofs} (called No-Commit Proofs). But \cite{No-Commit-Proofs} achieves consensus in at least four communication steps.

Kursawe \cite{OptimisticBFT3f} in his paper proposes an optimistic fast BFT consensus protocol that can achieve consensus in two-step communication only if all the nodes $n=3f+1$ are honest. Otherwise, the protocol will switch to randomized asynchronous consensus.

Martin and Alvisi \cite{FastBFT} presented FaB Paxos (Fast Byzantine Tolerant Paxos) that can tolerate $f < \frac{n}{5}$ Byzantine nodes. They also present a paramaterized version of FaB Paxos in which $n=3f+2t+1$, (where $t \leq f$) that achieve consensus in two communication steps while tolerating only $t$ Byzantine nodes. They also claim that $t=\frac{n-3f-1}{2}$, is the optimal resilience of the Fab Paxos. For $t=f$, the optimal resiliency is $f=\frac{n-1}{5}$.

Zyzzyva \cite{Kotla:2008:ZSB:1400214.1400236} is another protocol that can execute a proposal optimistically if all nodes follow the protocol. If not, then the protocol may fall back to three communication steps or may revoke the executed proposal. The main problem with Zyzzyva is that it depends (trusts) on the client for message propagation. Depending on a client to achieve consensus is not a safe option for blockchain. 

hBFT is another protocol that claims that it can achieve consensus in two communication steps while $f = \frac{n-1}{3}$. But authors in \cite{shrestha2019revisiting} show that hBFT cannot guarantee safety during view change. 

Abraham et al. \cite{GoodCase-Latency-PODC} considers the problem of reliable broadcast. Then they show that it is possible to achieve consensus in two communication steps with upper bound on Byzantine faults is $f = \frac{n+1}{5}$ and the primary is honest. Their result applies to both partially synchronous and synchronous models.

Kuznetsov et. al \cite{Revisiting-Optimal-Resilience-of-Fast-Byzantine-Consensus} generalizes the result of optimal resiliency to achieve consensus in two rounds with upper bound on faults is $f = \frac{n+1}{5}$. There result shows that this optimal resiliency bound can be achieved for protocols that employ primary to reach consensus along with those that do not.

Whereas, our protocol Fast B4B achieves consensus in two communication steps (when the primary is honest) while improving the upper bound on tolerating Byzantine faults to $f = \frac{n-1}{3}$.
\section{Conclusion}
\label{Section: Conclusion}
       In this paper, we present Fast B4B consensus protocol, that achieves consensus during normal protocol operation in just two communication steps. We show that the previous optimal bound for Byzantine resilience fast BFT can be improved from $f \leq \frac{n+1}{5}$ to $f \leq \frac{n-1}{3}$. Furthermore, Fast B4B does not employ any trusted hardware to achieve this improvement.


\bibliographystyle{IEEEtran}
\bibliography{new}

\end{document}